\newtheorem{theorem}{Theorem}
\newtheorem{assumption}{Assumption}
\newtheorem{remark}{Remark}
\newenvironment{proof}{\paragraph{Proof:}}{$\square$}
\begin{document}

 % Do not change the following line
\linespread{0.5}

\title{Oscillation Reduction for Knuckle Cranes}

\author{Michele Ambrosino$^{a}$, Arnaud Dawans$^b$, Brent Thierens$^{a}$,  Emanuele Garone$^{a}$}

\affiliation{
$^a$Service d’Automatique et d’Analyse des Systèmes, Université Libre de Bruxelles, Brussels, Belgium\\
$^b$Entreprises Jacques Delens S.A., Brussels, Belgium
}

\email{
\href{mailto:e.author1@aa.bb.edu}{Michele.Ambrosino@ulb.ac.be}, 
\href{mailto:e.author1@aa.bb.edu}{adawans@jacquesdelens.be},
\href{mailto:e.author1@aa.bb.edu}{Brent.Thierens@vub.be},
\href{mailto:e.author1@aa.bb.edu}{egarone@ulb.ac.be}
}

% Do not change the following three lines
\maketitle 
\thispagestyle{fancy} 
\pagestyle{fancy}

\begin{abstract}
Boom cranes are among the most common material handling systems due to their simple design. Some boom cranes also have an auxiliary jib connected to the boom with a flexible joint to enhance the maneuverability and increase the workspace of the crane. Such boom cranes are commonly called knuckle boom cranes. Due to their underactuated properties, it is fairly challenging to control knuckle boom cranes. To the best of our knowledge, only a few techniques are present in the literature to control this type of cranes using approximate models of the crane. In this paper we present for the first time a complete mathematical model for this crane where it is possible to control the three rotations of the crane (known as luff, slew, and jib movement), and the cable length. One of the main challenges to control this system is how to reduce the oscillations in an effective way. In this paper we propose a nonlinear control based on energy considerations capable of guiding the crane to desired sets points while effectively reducing load oscillations. The corresponding stability and convergence analysis is proved using the LaSalle’s invariance principle. Simulation results are provided to demonstrate the effectiveness and feasibility of the proposed method.
\end{abstract}

\begin{keywords}
Knuckle Cranes; Robotics; Oscillation Reduction; Underactuated Systems; Nonlinear Control
\end{keywords}

\section{Introduction}\label{sec:Introduction}

Cranes are material handling machines which are used in different industries (i.e., construction, manufacturing, shipbuilding and freight handling) for transporting heavy materials that humans cannot handle. Cranes have the capability of moving the load vertically and horizontally, either along a straight or a curved path. Nowadays, most cranes are manually operated by skilled operators. In this paper we focus on the design of an effective automatic control to obtain accurate positioning and reduce the swing of the load. 

\medskip

Cranes come in various sizes and designs to perform different tasks. Depending on their dynamic properties they can be classified as gantry cranes and rotary cranes. Gantry cranes can be further classified into overhead cranes and container cranes. Rotary cranes can be classified into tower cranes and boom cranes \cite{ref1}.

\medskip

In this paper we will focus on a so called 'knuckle boom' crane which is one of the most common type of boom crane. Boom cranes are characterized by a first boom that can rotate around two orthogonal axes (e.g. slew and luff motions). From the free end of the boom, a payload is suspended using a hoisting rope. The length of the hoisting rope can be driven using a winch. A boom crane can move the payload in the 3D space using the luff and slew movements of the boom and the hoisting of the payload. Such cranes are commonly used in construction sites \cite{boomcntr}. Boom cranes can also have more than one boom. Common variant of the boom crane has an auxiliary jib connected to the boom to enhance the maneuverability. Such boom cranes are the knuckle boom cranes (see Fig.~\ref{fig:crane}). In this paper we present for the first time a complete mathematical model for this kind of crane which takes into account not only the three main rotations (e.g. luff, slew, and jib movement), but also the cable dynamic and the payload oscillations. 

\medskip

% breve accenno al fatto che sono underattuate
As all cranes,  knuckle cranes are nonlinear systems with complicated underactuated dynamics. Underactuated systems \cite{ref2} are commonly found in several areas and applications, such as robotics, aerospace systems, marine systems, flexible systems, mobile systems, and locomotive systems. The condition of underactuation refers to a system with more DoF (number of independent variables that define the system configuration) to be controlled, than actuators (input variables). This restriction implies that some of the configuration variables of the system cannot be directly commanded, which highly complicates the design of control algorithms. In particular for the proposed model of knuckle crane the non-actuated variable are the swing angles of the payload, whereas the four actuated variable will be the three main rotation (i.e. luff, slew, and jib movements) and the length of the cable.

\medskip

%%%% parlare del controllo

Cranes can be controlled using different control laws depending on their operations, which usually involve the process of gripping, lifting, transporting the load, lowering, and releasing the load. A damping capacity of the system plays a significant role towards the precision motion performance. The control schemes developed for cranes can mainly be categorised into open loop and closed loop techniques \cite{ref5}. Control schemes based on a combination of open and closed loop techniques have also been proposed. Input shaping is one of the most used open loop techniques, mostly based on a linearized system, that can be applied in real time, mainly for control of the oscillations of the payload. Anti-swing crane controls using input shaping have been widely implemented in the literature \cite{ref3}-\cite{ref4}. Concerning closed loop techniques, Proportional Integral Derivative (PID) control laws have been proposed for instance in \cite{ref5}, where the authors proposed a position control of a gantry crane system. A state feedback controller has been implemented in \cite{ref7} for a boom
crane system in order to control the load sway angles in the vertical and horizontal boom motions, as well as in the vertical and horizontal boom angles. In \cite{erg} the authors present a constrained control scheme based on the ERG framework for the control of boom cranes. In \cite{ref8} a model predictive control (MPC) approach was used for a boom crane in order to reduce the swing angles as much as possible.  

\medskip
Compared with the other kinds of cranes like boom cranes, the study of knuckle cranes is still at an early stage with much less reported control strategies. In \cite{ref12} the authors focus on controlling mobile electro-hydraulic proportional valves to move the crane to a desired position. In \cite{ref13} the authors solve the problem of controlling knuckle crane through the inverse kinematics without take into account the dynamic of the cable and the payload. In \cite{ref14} an anti-sway control is shown which is performed by simplifying the dynamic model and assuming that the tip of the crane can be controlled directly.
\medskip

%%%% cosa da questo paper:
% modello completo e dettagliato di una knuckle e per la 
% prima volta

To the best of our knowledge, no research has been carried out to develop a detailed mathematical model and develop a control strategy taking into account the strongly nonlinear nature of this type of crane. In this paper we advance the state of the art of the knuckle crane by introducing for the first time a complete mathematical model in which we takes into account all of the degrees of freedom (DoF) that characterize this type of system (i.e. the three rotations, the length of the rope and the payload swing angles) and proposing  novel control strategy designed directly on the nonlinear model. 

\medskip

The rest of this paper is organized as follows. In Section 2,
the dynamic model of the knuckle crane and the control objectives are provided. In Section 3, the proposed
controller is designed, and the corresponding stability analysis is provided in detail. Section 4 shows the results of the simulations regarding the proposed control strategy.

\section{Dynamic Model}

\begin{figure}[ht!]
\centering
\includegraphics[width=0.6\columnwidth]{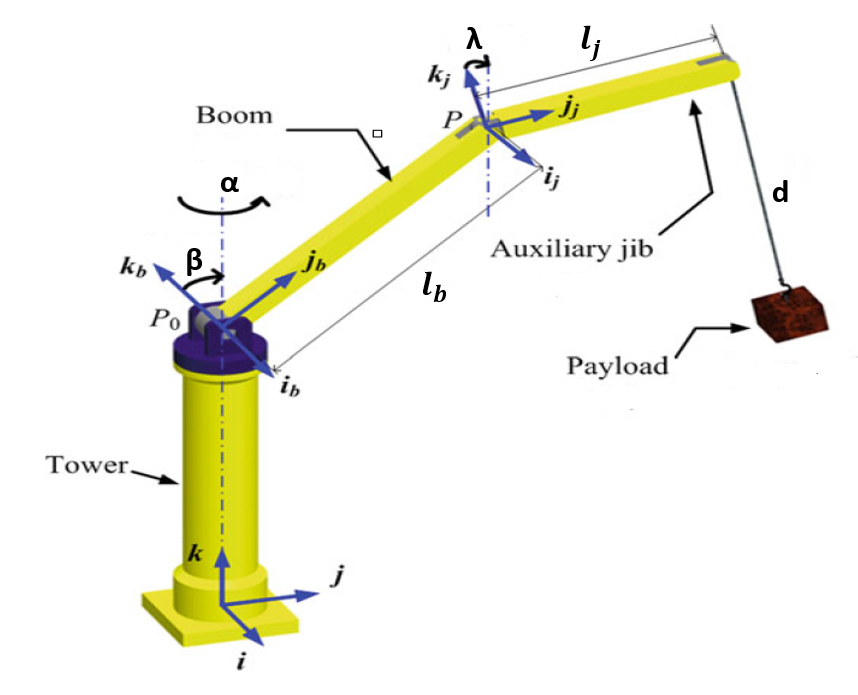}
\caption{\label{fig:crane} Model of a knuckle crane \cite{ref1}.}
\end{figure}

\begin{figure}[ht!]
\centering
\includegraphics[width=0.6\columnwidth]{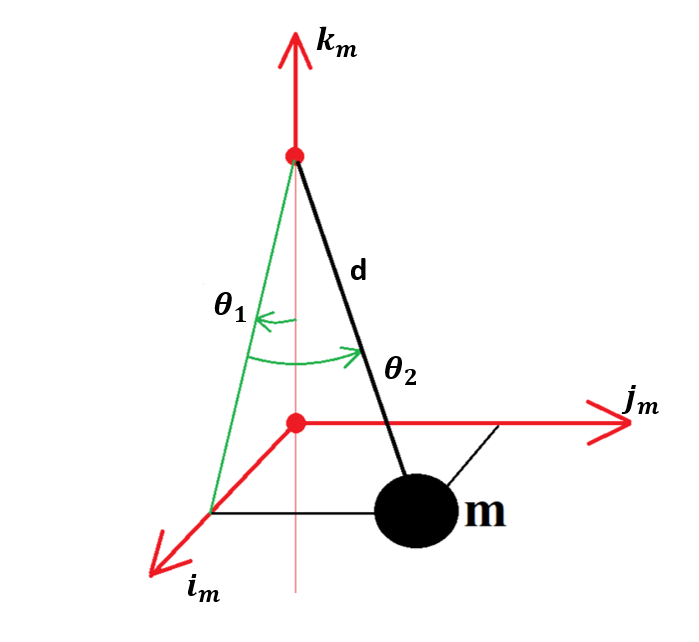}
\caption{\label{fig:pay} Payload swing angles.}
\end{figure}

A schematic view of knuckle crane is shown in Fig.\ref{fig:crane}. The knuckle crane consists of a first boom of length $l_b$ and mass $m_b$ connected to the tower with one revolute joint. The auxiliary jib of length $l_j$ and mass $m_j$ is linked to the first boom by a revolute joint. For the sake of simplicity in this paper, both of these two links are considered to be rigid. The cable is supposed to be massless and rigid, thus the lifting mechanism can be represent as a prismatic joint. The payload of mass $m$ can be represented as lumped mass. The two swing angles of the payload are represented in the Fig.\ref{fig:pay}.

\medskip

The configuration of the crane can be represented by six generalized coordinates, in which, $\alpha$ is the slew angle of
the tower, $\beta$ is the luff angle of the boom, $\gamma$ is the luff angle of the jib, \textit{d} is the length of
the rope, $\theta_1$ is the tangential pendulation mainly due to the motion of the tower and $\theta_2$ is the radial sway mainly due to the motion of the boom. 

\medskip

To simplify the ensuing analysis, the following abbreviations are used: $S_{\alpha} \triangleq \sin(\alpha), S_{\beta} \triangleq sin(\beta), S_{\gamma} \triangleq \sin(\gamma), S_{\theta_1} \triangleq \sin(\theta_1), S_{\theta_2} \triangleq \sin(\theta_2), C_{\alpha} \triangleq \cos(\alpha), C_{\beta} \triangleq \cos(\beta), C_{\gamma} \triangleq \cos(\gamma), C_{\theta_1} \triangleq \cos(\theta_1), C_{\theta_2} \triangleq \cos(\theta_2)$.

\medskip

The dynamic model of the knuckle crane is obtained by using the Lagrange method \cite{ref15}. Firstly, we need to express the system kinematic energy $T(t)$, which consists of three parts, the boom kinematic energy $T_t(t)$, the jib kinematic energy $T_j(t)$, and the payload kinematic energy $T_p(t)$. Then, the system potential energy $U(t)$ consists of the boom gravity energy $U_t(t)$, the jib gravity energy $U_j(t)$, and the payload gravity energy $U_p(t)$. Next, the Lagrange function is constructed as
\begin{equation}\label{eq:lag}
\begin{split}
L(t) = T(t) - U(t) \quad\quad\quad\quad\\ = T_b(t) + T_j(t) + T_p(t) - U_b(t) - U_j(t) - U_p(t),
\end{split}
\end{equation}
where
\begin{eqnarray}\label{eq:T}
&T(t) = {1\over{8}}(m_B((l_BC_{\beta}S_{\alpha}\dot{\alpha} +\nonumber \\ &   l_BC_{\alpha}S_{\beta}\dot{\beta})^2 
+ (l_BC_{\alpha}C_{\beta}\dot{\alpha}- l_BS_{\alpha}S_{\beta}\dot{\beta})^2 
+ l_B^2C_{\beta}^2\dot{\beta}^2)) \nonumber \\  
&
+ {1\over{2}}(m_J((l_BC_{\beta}S_{\alpha}\dot{\alpha} 
+ l_BC_{\alpha}S_{\beta}\dot{\beta} +  {1\over{2}}(l_JC_{\gamma}S_{\alpha}\dot{\alpha}) 
\nonumber\\ &+ {1\over{2}}(l_JC_{\alpha}S_{\gamma}\dot{\gamma}))^2  + (l_BC_{\alpha}C_{\beta}\dot{\alpha} 
+ {1\over{2}}(l_JC_{\alpha}C_{\gamma}\dot{\alpha})  \nonumber \\ &- l_BS_{\alpha}S_{\beta}\dot{\beta} 
- {1\over{2}}(l_JS_{\alpha}S_{\gamma}\dot{\gamma}))^2 - C_{\alpha}C_{\theta_2}\dot{\theta_2}d \nonumber \\  
&+ (l_BC_{\beta}\dot{\beta} 
+ {1\over{2}}(l_JC_{\gamma}\dot{\gamma}))^2)) + l_JC_{\alpha}S_{\gamma}\dot{\gamma}\nonumber \\ & + {1\over{2}}(m((C_{\theta_2}S_{\alpha}S_{\theta_1}\dot{d} 
- C_{\alpha}S_{\theta_2}\dot{d} \nonumber \\  
&+ l_BC_{\beta}S_{\alpha}\dot{\alpha} + l_BC_{\alpha}S_{\beta}\dot{\beta}+ l_JC_{\gamma}S_{\alpha}\dot{\alpha} \nonumber \\  
&  + S_{\alpha}S_{\theta_2}\dot{\alpha}d + C_{\alpha}C_{\theta_2}S_{\theta_1}\dot{\alpha}d + C_{\theta_1}C_{\theta_2}S_{\alpha}\dot{\theta_1}d \nonumber \\ & - S_{\alpha}S_{\theta_1}S_{\theta_2}\dot{\theta_2}d)^2 + (l_BC_{\beta}\dot{\beta} - C_{\theta_1}C_{\theta_2}\dot{d} \nonumber \\ & + l_JC_{\gamma}\dot{\gamma} + C_{\theta_2}S_{\theta_1}\dot{\theta_1}d  + C_{\theta_1}S_{\theta_2}\dot{\theta_2}d)^2 + (S_{\alpha}S_{\theta_2}\dot{d} \nonumber \\ & +  l_BC_{\alpha}C_{\beta}\dot{\alpha} + l_JC_{\alpha}C_{\gamma}\dot{\alpha} - l_BS_{\alpha}S_{\beta}\dot{\beta} - l_JS_{\alpha}S_{\gamma}\dot{\gamma} \nonumber \\ &+ C_{\alpha}S_{\theta_2}\dot{\alpha}d + C_{\theta_2}S_{\alpha}\dot{\theta_2}d + C_{\alpha}C_{\theta_2}S_{\theta_1}\dot{d} \nonumber \\ &+ C_{\alpha}C_{\theta_1}C_{\theta_2}\dot{\theta_1}d - C_{\theta_2}S_{\alpha}S_{\theta_1}\dot{\alpha}d - C_{\alpha}S_{\theta_1}S_{\theta_2}\dot{\theta_2}d)^2)) \nonumber \\ & + {1\over{2}}I_{tot}\dot{\alpha}^2 + {1\over{2}}I_B\dot{\beta}^2  + {1\over{2}}I_J\dot{\gamma}^2, 
\end{eqnarray}

\begin{equation}\label{eq:U}
\begin{split}
U(t) = gm(l_BS_{\beta} + l_JS_{\gamma} - C_{\theta_1}C_{\theta_2}d) 
+ gm_J(l_BS_{\beta} \\+ {1\over{2}}l_JS_{\gamma}) + {1\over{2}}gl_Bm_BS_{\beta},
\end{split}
\end{equation}

The equations of the motion of the crane are derived using the Lagrange's equation
\begin{equation}
\frac{d}{dt}\left(\frac{\partial L(q, \dot{q})}{\partial{\dot{q}}}\right) -  \frac{\partial L(q, \dot{q})}{\partial{q}} = \zeta, 
\end{equation}

where q = $[\alpha, \beta, \gamma, d, \theta_1, \theta_2]^T \in{\mathbb{R}^6}$ represents the system state vector, and $\zeta$ = $[u_1, u_2, u_3, u_4,0,0]^T \in{\mathbb{R}^6}$ represents the control input vector.

The dynamic model of a  knuckle crane (see Fig.~\ref{fig:crane}) can be written as:

\begin{equation}\label{eq:modelmatrix}
{{M(q)\ddot{q} + C(q,\dot{q})\dot{q} +F(\dot{q}) + g(q)} = {\begin{bmatrix}
I_{4x4} \\ 0_{2x2} \end{bmatrix}}u. }
\end{equation}

The matrices M(q) $\in{\mathbb{R}^{6x6}}$,$ C(q,\dot{q})\in{\mathbb{R}^{6x6}}$,$F(\dot{q}) \in{\mathbb{R}^{6}}$ and g(q) $\in{\mathbb{R}^{6}}$ represent the inertia, centripetal-Coriolis, air dynamic friction \cite{ref17}, and gravity.
The system matrices (see Appendix for the detailed description) are defined as follows:

\begin{equation}
M(q) = \begin{bmatrix}
m_{11}&m_{12}&m_{13}&m_{14}&m_{15}&m_{16}\\
m_{21}&m_{22}&m_{23}&m_{24}&m_{25}&m_{26}\\
m_{31}&m_{32}&m_{33}&m_{34}&m_{35}&m_{36}\\
m_{41}&m_{42}&m_{43}&m_{44}&0&0\\
m_{51}&m_{52}&m_{53}&0&m_{55}&0\\
m_{61}&m_{62}&m_{63}&0&0&m_{66}\\
\end{bmatrix}, 
\end{equation}

\begin{equation}
C(q,\dot q) = \begin{bmatrix}
c_{11}&c_{12}&c_{13}&c_{14}&c_{15}&c_{16}\\
c_{21}&c_{22}&c_{23}&c_{24}&c_{25}&c_{26}\\
c_{31}&c_{32}&c_{33}&c_{34}&c_{35}&c_{36}\\
c_{41}&c_{42}&c_{43}&0&c_{45}&c_{46}\\
c_{51}&c_{52}&c_{53}&c_{54}&c_{55}&c_{56}\\
c_{61}&c_{62}&c_{63}&c_{64}&c_{65}&c_{66}\\
\end{bmatrix}, 
\end{equation}

\begin{equation}\label{eq:grav}
{
g(q) = \begin{bmatrix} 0,g_2,g_3,g_4,g_5,g_6
\end{bmatrix}}^T. 
\end{equation}

\begin{equation}\label{eq:F}
{
F(\dot{q}) = \begin{bmatrix} 0,0,0,0,f_1,f_2
\end{bmatrix}}^T, 
\end{equation}

Although the equation of motion (\ref{eq:modelmatrix}) is quite complicated, it has several fundamental properties that can be exploited to facilitate the design of the controller. The two main properties that will be exploited in the next section are:
\begin{enumerate}[label=Property \arabic*.,itemindent=*]
  \item The matrix 
  \begin{equation*}
      {1\over{2}}\dot{M}(q) - C(q,\dot{q}),
  \end{equation*}
  is skew symmetric which means that
  \begin{equation*}
      \zeta^T\left[{1\over{2}}\dot{M}(q) - C(q,\dot{q})\right]\zeta = 0, \quad \zeta \in \mathbb{R}^{6} 
  \end{equation*}
  \item  The gravity vector (\ref{eq:grav}) can be  obtained as the gradient of the crane potential energy (\ref{eq:U}), i.e., $g(q) = \frac{\partial U(q)}{\partial{\dot{q}}}$.
\end{enumerate}

%\begin{remark}
%As one can seen from equation (\ref{eq:modelmatrix}), the knuckle crane is a %typical underactuated system. Infact, there are six DoF to be controlled (e.g. %$\alpha, \beta, \gamma, d, \theta_1, \theta_2)$, with only four inputs %available $u_1, u_2, u_3, u_4$..
%\end{remark}

\subsection{Control objective}

The aim of the control is to move the crane to the desired position and to dampen the swing angles of the load as much as possible. 

\medskip

The control objective can be described mathematically as

\begin{equation}\label{eq:constraints}
\begin{split}
\lim_{t\to \infty}[\alpha(t),\beta(t),\gamma(t),d(t),\theta_1(t),\theta_2(t)] = [\alpha_{d},\beta_{d},\gamma_{d},d_{d},0,0], \\
\lim_{t\to \infty}[\dot{\alpha}(t),\dot{\beta}(t),\dot{\gamma}(t),\dot{d}(t),\dot{\theta_1}(t),\dot{\theta_2}(t)] = [0,0,0,0,0,0],
\end{split}
\end{equation}

where $\alpha_d, \beta_d, \gamma_d, d_d$ are the desired references for the actuated states. 
\newline
In our development we will consider the following reasonable assumptions.
\begin{assumption}\label{ass:1}
The payload swings satisfy the following inequality  $\lvert \theta_{1,2} \rvert\ <{\frac{\pi}{2}}.$
\end{assumption}
\begin{assumption}\label{ass:2}
The cable length is always greater than zero to avoid singularity in the model (\ref{eq:modelmatrix}): $d(t) > ,\forall t\geq0.$
\end{assumption}

\section{Control Design and Stability Analysis}

The control strategy proposed in this paper consists of a nonlinear control law based on energy consideration. The corresponding stability and convergence analysis is demostrated by using the LaSalle’s invariance principle.

\medskip

In order to develop our control law, we started to considered the energy of system (\ref{eq:T})-(\ref{eq:U}), wich is

\begin{equation}\label{eq:E}
    E(t) = {1\over{2}}\dot{q}^TM(q)\dot{q} + mgd(1-C_{\theta_1}C_{\theta_2}),
\end{equation}

where the first term is the kinetic energy of the crane, whereas the second term represents the payload potential energy. Based on (\ref{eq:E}), we can define the following Lyapunov function candidate:
\begin{equation}\label{eq:V}
\begin{split}
V(t) = {1\over{2}}\dot{q}^TM(q)\dot{q} + mgd(1-C_{\theta_1}C_{\theta_2}) \\ +{1\over{2}}k_{p\alpha} e_{\alpha}^2 +{1\over{2}}k_{p\beta} e_{\beta}^2 +{1\over{2}}k_{p\gamma} e_{\gamma}^2 +{1\over{2}}k_{pd} e_{d}^2,
\end{split}
\end{equation}

where $e_{\alpha},e_{\beta},e_{\gamma},e_{d}$ are the error signals defined as:
\begin{equation}
\begin{split}
e_{\alpha} = \alpha_d - \alpha, \quad e_{\beta} = \beta_d - \beta,
e_{\gamma} = \gamma_d - \gamma, \quad e_{d} = d_d - d.
\end{split}
\end{equation}

Differentiating the equation (\ref{eq:V}) with respect
to the time and using (\ref{eq:modelmatrix}) we obtain
\begin{equation}\label{eq:Vdot}
\begin{split}
\dot V(t) = \dot{\alpha}(u_1-k_{p\alpha} e_{\alpha}) \\ +\dot{\beta}(u_2-k_{p\beta} e_{\beta} -gl_BC_{\beta}(m+{1\over{2}}m_Bm_J)) \\ \dot{\gamma}(u_3-k_{p\gamma} e_{\gamma} -gl_JC_{\gamma}(m + {1\over{2}}m_J)) \\ +\dot{d}(u_4-k_{pd} e_{d}+mgC_{\theta_1}C_{\theta_2} +mg(1-C_{\theta_1}C_{\theta_2})) \\-d_{\theta_1}C_{\theta_2}^2 \lvert \theta_1\rvert\ \dot{\theta}_1^2-d_{\theta_2} \lvert \theta_1\rvert\ \dot{\theta}_2^2.
\end{split}
\end{equation}

In order to cancel the gravitational terms and keep $\dot{V}(t)$ non-positive, the following controller is designed:
\begin{equation}\label{eq:u1}
u_1 = k_{p\alpha} e_{\alpha} - k_{d\alpha} \dot{\alpha}, \quad\quad\quad\quad\quad\quad\quad\quad\quad
\end{equation}
\begin{equation}\label{eq:u2}
u_2 = k_{p\beta} e_{\beta} - k_{d\beta} \dot{\beta} + gl_BC_{\beta}(m+{1\over{2}}m_Bm_J), 
\end{equation}
\begin{equation}\label{eq:u3}
u_3 = k_{p\gamma} e_{\gamma} - k_{d\gamma} \dot{\gamma} + gl_JC_{\gamma}(m + {1\over{2}}m_J),\quad\;
\end{equation}
\begin{equation}\label{eq:u4}
u_4 = k_{pd} e_{d} - k_{dd} \dot{d} + mg,\quad\quad\quad\quad\quad\quad\;
\end{equation}

where $k_{p\alpha}$, $k_{p\beta}$, $k_{p\gamma}$, $k_{pd}$, $k_{d\alpha}$, $k_{d\beta}$, $k_{d\gamma}$, $k_{dd}$ $\in\mathbb{R}$ are positive control gains.
\newline
Substituting (\ref{eq:u1})-(\ref{eq:u4}) into (\ref{eq:Vdot}), one obtains
\begin{equation}\label{eq:Vdot2}
\begin{split}
\dot V(t) = -k_{d\alpha} \dot{\alpha}^2 - k_{d\beta} \dot{\beta}^2 -k_{d\gamma} \dot{\gamma}^2 \\ - k_{dd} \dot{d}^2 -d_{\theta_1}C_{\theta_2}^2\dot{\theta}_1^2-d_{\theta_2}\dot{\theta}_2^2 \leq 0,  
\end{split}
\end{equation}

The following theorem describes the stability property of the crane using the proposed control law (\ref{eq:u1})-(\ref{eq:u4}).

\begin{theorem}
Consider the system (\ref{eq:modelmatrix})-(\ref{eq:F}). Under Assumptions 1-2, the controller (\ref{eq:u1})-(\ref{eq:u4}) makes every equilibrium point (\ref{eq:constraints}) satisfying Assumptions 1-2, asymptotically stable. 
\end{theorem}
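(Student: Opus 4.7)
The plan is to apply LaSalle's invariance principle to the candidate $V$ in (\ref{eq:V}), whose closed-loop derivative (\ref{eq:Vdot2}) is already manifestly non-positive.

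First, I would verify that $V$ is a bona fide Lyapunov function at the reference (\ref{eq:constraints}). The inertia $M(q)$ is symmetric positive definite, so the kinetic-energy piece is non-negative and vanishes only for $\dot q=0$; under Assumption \ref{ass:1} one has $1-C_{\theta_1}C_{\theta_2}\ge 0$ with equality iff $\theta_1=\theta_2=0$, which together with Assumption \ref{ass:2} renders the payload-potential piece non-negative; and the four quadratic error terms are zero exactly at the reference. Hence $V$ is positive definite about the desired state, and $\dot V\le 0$ from (\ref{eq:Vdot2}) already yields stability and boundedness of trajectories.

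Next, I would invoke LaSalle's theorem: every bounded trajectory converges to the largest invariant set $\mathcal M\subseteq \Omega:=\{\dot V=0\}$. Because $k_{d\alpha},k_{d\beta},k_{d\gamma},k_{dd},d_{\theta_1},d_{\theta_2}$ are strictly positive and $C_{\theta_2}^2>0$ by Assumption \ref{ass:1}, membership in $\Omega$ forces $\dot\alpha=\dot\beta=\dot\gamma=\dot d=\dot\theta_1=\dot\theta_2=0$. Invariance then propagates this to $\ddot q\equiv 0$ along trajectories in $\mathcal M$, so the closed-loop dynamics (\ref{eq:modelmatrix}) collapse to the algebraic balance $g(q)=\bigl[I_{4\times 4}\;\;0_{2\times 2}\bigr]^\top u$ with the inputs evaluated at $\dot q=0$. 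The four actuated rows are then routine: row 1 gives $u_1=k_{p\alpha}e_\alpha=0$, while the tailored gravity-compensation terms in (\ref{eq:u2})--(\ref{eq:u4}) cancel $g_2,g_3,g_4$ and leave $k_{p\beta}e_\beta=k_{p\gamma}e_\gamma=k_{pd}e_d=0$, i.e.\ $\alpha=\alpha_d,\ \beta=\beta_d,\ \gamma=\gamma_d,\ d=d_d$.

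The decisive step, and the main obstacle, is the analysis of the two unactuated rows, which on $\mathcal M$ reduce to $g_5(q)=0$ and $g_6(q)=0$. From the potential (\ref{eq:U}) these identities read, up to the common non-zero factor $mgd$, $S_{\theta_1}C_{\theta_2}=0$ and $C_{\theta_1}S_{\theta_2}=0$. Assumption \ref{ass:2} makes $d\neq 0$ and Assumption \ref{ass:1} keeps both $C_{\theta_1}$ and $C_{\theta_2}$ strictly positive, so the two equations uniquely pin down $\theta_1=\theta_2=0$. Thus $\mathcal M$ collapses to the desired equilibrium (\ref{eq:constraints}), and LaSalle's principle delivers the claimed asymptotic stability.
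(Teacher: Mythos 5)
Your proof follows essentially the same route as the paper's: LaSalle's principle applied to $V$, velocities forced to zero on the invariant set, the two unactuated rows pinning down $\theta_1=\theta_2=0$ via Assumptions 1--2, and the actuated rows then forcing the tracking errors to vanish. The only (harmless) difference is one of ordering --- the feedforward $mg$ in (\ref{eq:u4}) cancels $g_4=-mgC_{\theta_1}C_{\theta_2}$ only \emph{after} $\theta_1=\theta_2=0$ has been obtained from rows 5--6, so $k_{pd}e_d=0$ must be deduced last, exactly as the paper does in (\ref{eq:sold}) --- while on the plus side you explicitly verify the positive definiteness of $V$ (hence boundedness of trajectories), a hypothesis of LaSalle's theorem that the paper leaves implicit.
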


\begin{proof}

As already seen, the derivative of the Lyapunov function candidate (\ref{eq:V}) is (\ref{eq:Vdot2}) which is negative semidefinite.

At this point let $\Phi$ be defined as the set where $\dot{V}(t)=0$, i.e.
 \begin{equation}
     \Phi = {\{q,\dot{q}|\dot{V}(t) = 0\}}.
 \end{equation}
 
Further, let $\Gamma$ represent the largest invariant set in $\Phi$ where the Assumptions 1-2 are verified. Based on (\ref{eq:Vdot2}), it can be seen that $\Gamma$ is the set such that:
 \begin{equation}\label{eq:inv}
 \begin{split}
    \dot{\alpha}=0,\dot{\beta}=0,\Rightarrow
    \ddot{\alpha}=0,\ddot{\beta}=0, \quad\quad  \\
    \dot{\gamma}=0,\dot{d}=0, \Rightarrow \ddot{\gamma}=0,\ddot{d}=0, \quad\quad \\ 
    \dot{\theta_1} =0,\dot{\theta_2} = 0, \Rightarrow \ddot{\theta_1} = 0, \ddot{\theta_2} = 0, \quad\\
    \dot{e}_{\alpha} = 0, \dot{e}_{\beta} = 0, \Rightarrow e_{\alpha} = \phi_1,e_{\beta} = \phi_2, \\
    \dot{e}_{\gamma} = 0, \dot{e}_{d} = 0, \Rightarrow e_{\gamma} = \phi_3 ,e_{d} = \phi_4, \\ 
\end{split}
\end{equation}

where $\phi_{1,2,3,4}$ are constants to be determined.

Combining (\ref{eq:inv}) with (\ref{eq:u1})-(\ref{eq:u4}) and (\ref{eq:modelmatrix}), we obtain the conditions:
\begin{equation}\label{syst_sol}
\begin{split}
k_{p\alpha}e_{\alpha} = 0,\quad\quad\quad\quad\quad \\
k_{p\beta}e_{\beta} = 0,\quad\quad\quad\quad\quad \\
k_{p\gamma}e_{\gamma} = 0,\quad\quad\quad\quad\quad \\
-gmcos\theta_1cos\theta_2 = -mg + k_{pd}e_{d}, \\
gmdcos\theta_2sin\theta_1 = 0,\quad\quad\quad\quad\quad \\
gmdcos\theta_1sin\theta_2 = 0.\quad\quad\quad\quad\quad\\
\end{split}
\end{equation}

From the first three equations of (\ref{syst_sol}),we will have that
\begin{equation}\label{eq:sola}
    k_{p\alpha}e_\alpha =  0  \Rightarrow e_\alpha = 0 \Rightarrow \phi_1 = 0 \Rightarrow \alpha = \alpha_d,  
\end{equation}
\begin{equation}\label{eq:solb}
    k_{p\beta}e_\beta =  0  \Rightarrow e_\beta = 0 \Rightarrow \phi_2 = 0 \Rightarrow \beta = \beta_d,  
\end{equation}
\begin{equation}\label{eq:solg}
    k_{p\gamma}e_\gamma =  0  \Rightarrow e_\gamma = 0 \Rightarrow \phi_3 = 0 \Rightarrow \gamma = \gamma_d.  
\end{equation}
From the last two equations of (\ref{syst_sol}), due to Assumption \ref{ass:2}, one can be obtained that:
\begin{equation}
\begin{split}
cos\theta_2sin\theta_1 = 0, \\
cos\theta_1sin\theta_2 = 0,\\
\end{split}
\end{equation}
The following conclusion can be achieved:
\begin{equation}\label{eq:solutionth1th2}
    \theta_1 = \theta_2 =  (k\pi) \lor {(2k+1)\over{2}}\pi, \quad k \in \mathbb{Z}.   
\end{equation}
However, due to Assumption \ref{ass:1}, the only acceptable solution will be:
\begin{equation}\label{eq:solutionth1th22}
    \theta_1 = \theta_2 =  0.  
\end{equation}
By inserting the (\ref{eq:solutionth1th22}) in the forth equation of (\ref{syst_sol}), one can conclude that:
\begin{equation}\label{eq:sold}
    k_{pd}e_d =  0  \Rightarrow e_d = 0 \Rightarrow \phi_4 = 0 \Rightarrow d = d_d,  
\end{equation}
\end{proof}

\begin{remark}
Other types of cranes such as overhead cranes, boom cranes, tower cranes etc, have similar dynamic characteristics to a knuckle cranes. Accordingly, the controller proposed in this
paper may be adapted to all these underactuated system.
\end{remark}

\section{Simulation Results}
In order to demonstrate the effectiveness of the proposed strategy, in this section we simulate the knuckle crane in Fig.~\ref{fig:crane}. The practical performances of the proposed control approach are compared with a linear quadratic regulator (LQR) obtained by linearization.The physical parameters are selected as follows: $m_b=2kg, \quad m_j=3kg, \quad m=1k, \quad l_b=5m, \quad l_j=4m$.
\newline
The parameters for the control law (\ref{eq:u1})-(\ref{eq:u4}) are the following:$ k_{p\alpha} = 30,k_{p\beta} = 10,k_{p\gamma} = 10,k_{pd} = 1,k_{d\alpha} = 50,k_{d\beta} = 30,k_{d\gamma} = 50,k_{dd} = 10.$
\newline
For the LQR control approach, first, the crane dynamics is linearized around the equilibrium point, and then, the following weight matrices have been chosen to stabilize the plant: $Q = diag\{25, 400, 450, 200, 1, 1, 1, 1, 1, 1, 120, 120\}$ and $R = diag\{0.1, 0.1, 0.1, 1\}$.

\medskip

As one can see in Figg.~\ref{fig:alpha}-\ref{fig:beta}-\ref{fig:gamma} both controllers successfully move the crane towards
the desired angular positions. 
Although the LQR controller moves the first three degrees of the cranes (tower, boom, and jib angles) slightly faster than the method presented in this paper, one can notice that our method produces much less oscillations of the load (see Figg.\ref{fig:th1}-\ref{fig:th2})  and of the cable (see Fig.~\ref{fig:d}) , resulting in an overall faster and safer stabilization of the load. It is worth noting that in both methods, the input profiles and maximum values are almost similar. This values are well within the typical limits of the crane actuators.

\begin{figure}[ht!]
\centering
\includegraphics[width=6cm,height=3cm]{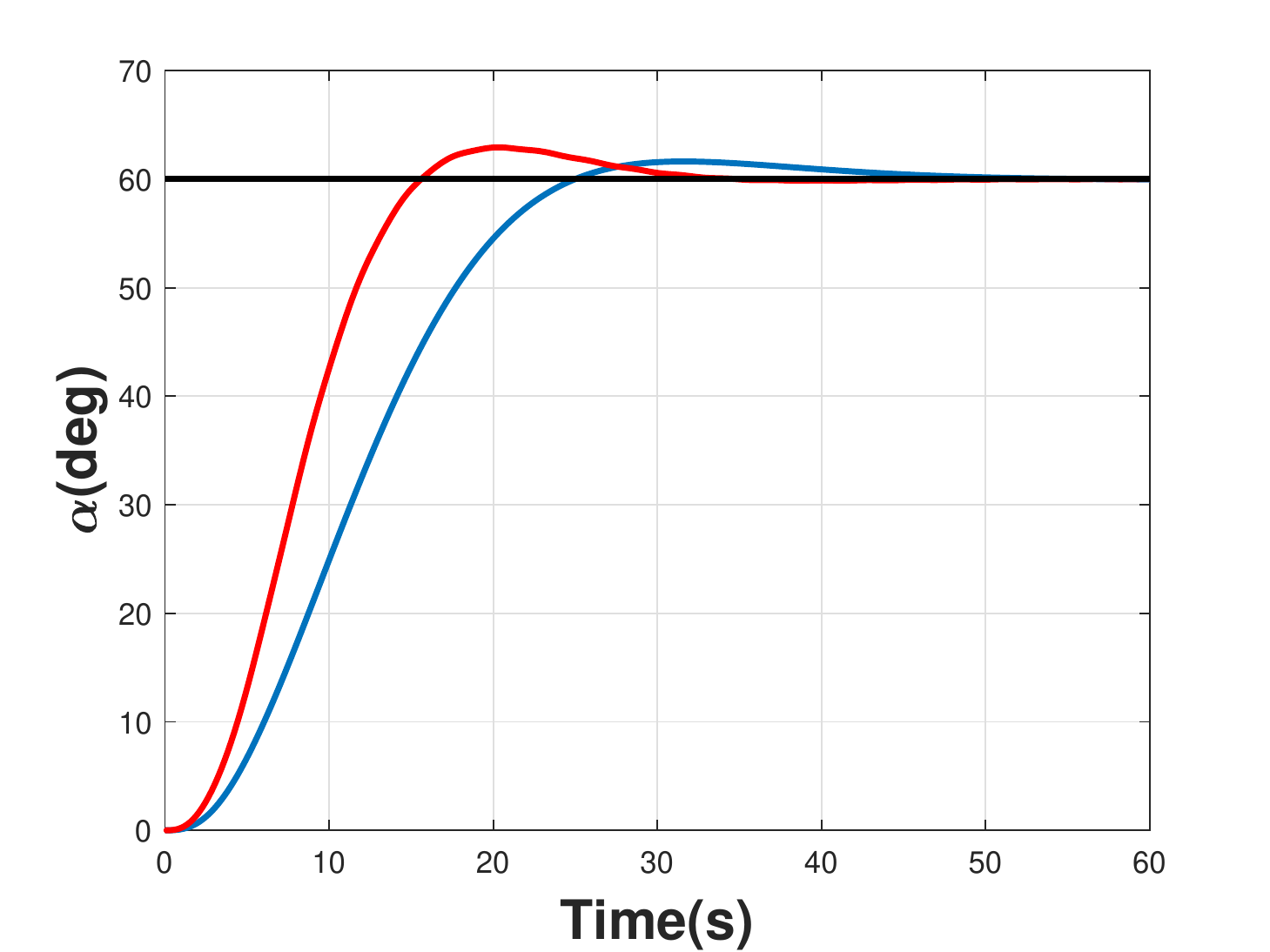}
\caption{\label{fig:alpha} Tower angle $\alpha$. Black line: Desired reference. Blue line: Nonlinear controller. Red line: LQR.}
\end{figure}

\begin{figure}[ht!]
\centering
\includegraphics[width=6cm,height=3cm]{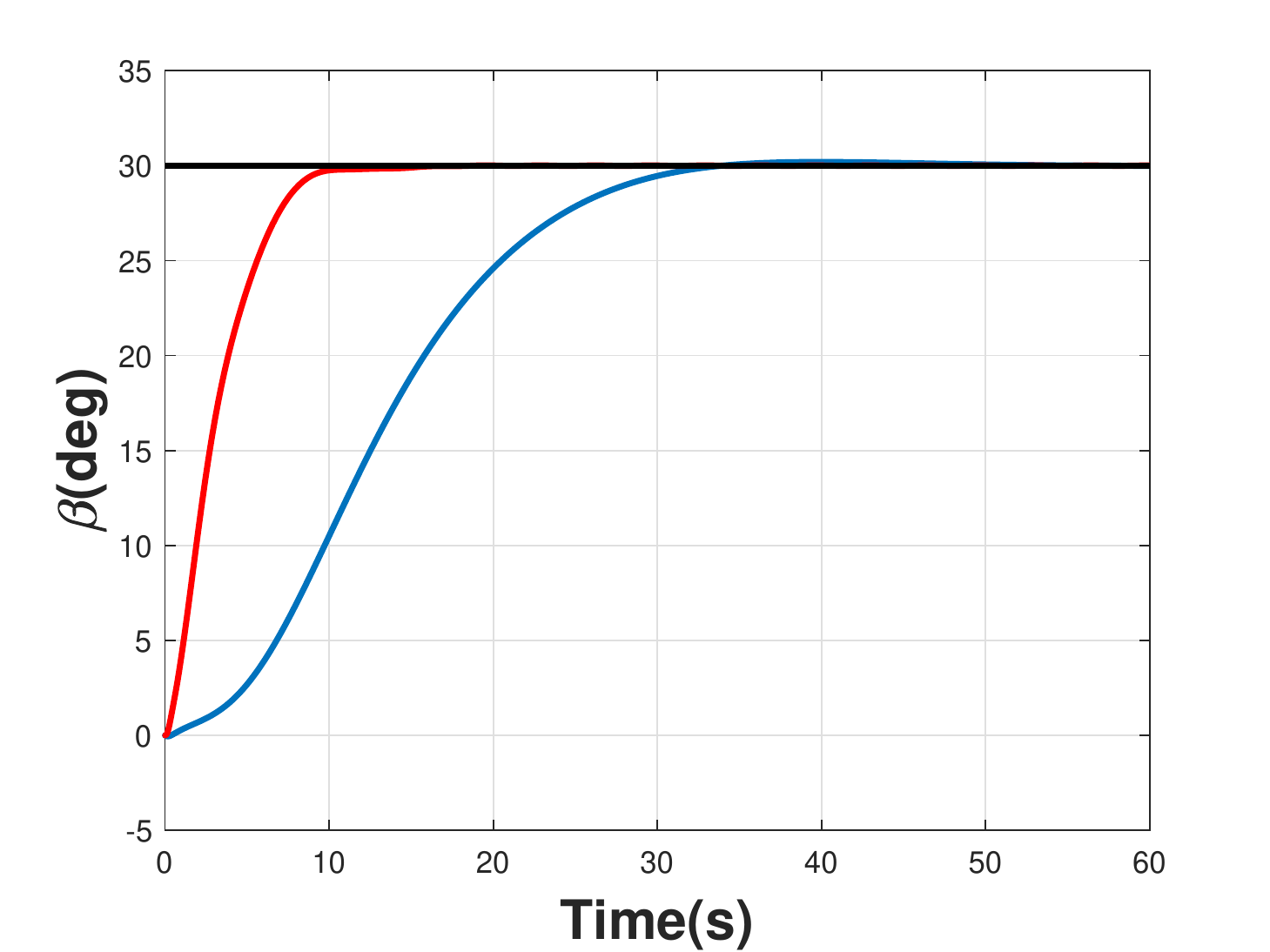}
\caption{\label{fig:beta} Boom angle $\beta$. Black line: Desired reference. Blue line: Nonlinear controller. Red line: LQR.}
\end{figure}

\begin{figure}[ht!]
\centering
\includegraphics[width=6cm,height=3cm]{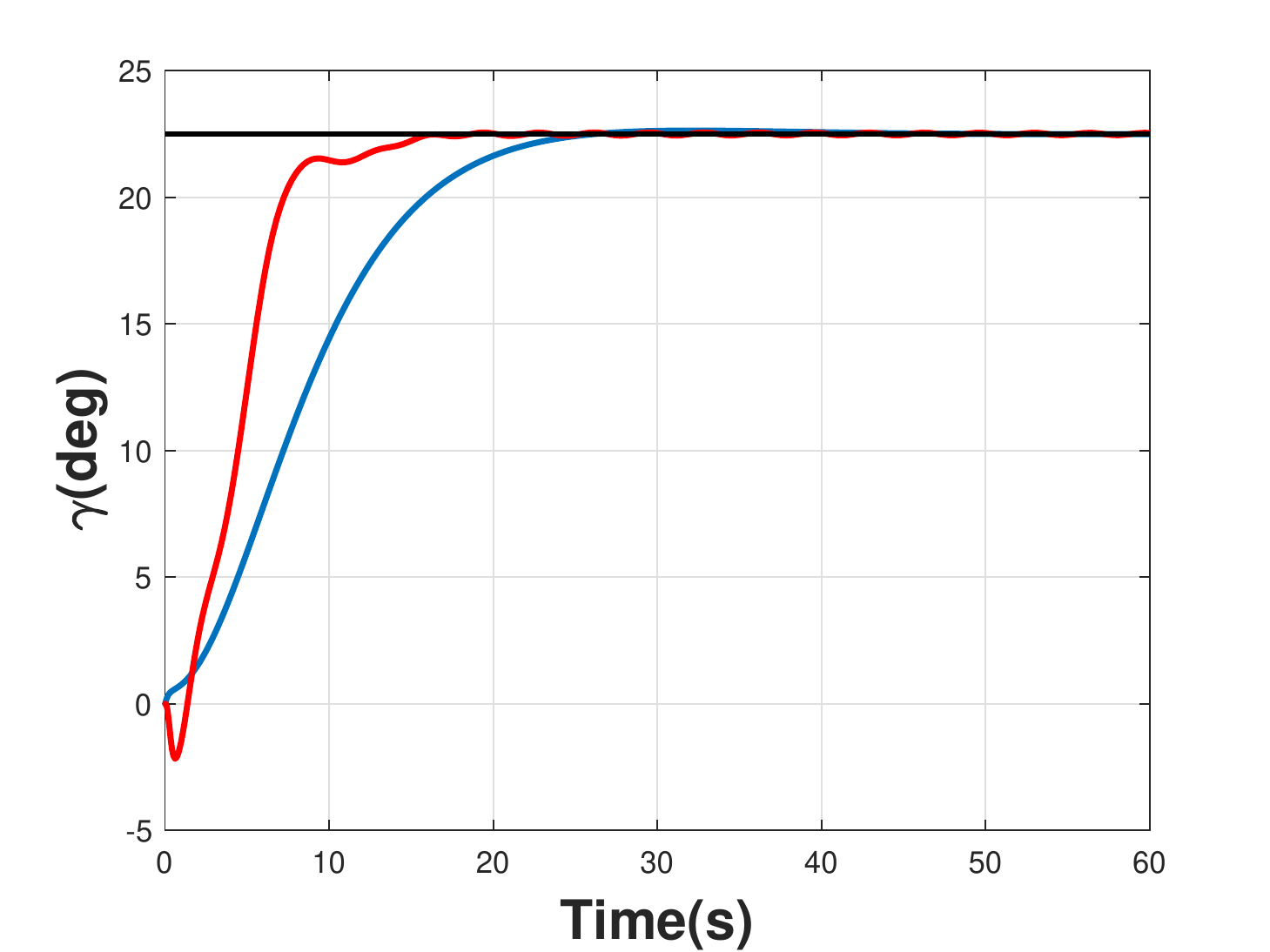}
\caption{\label{fig:gamma} Jib angle $\gamma$. Black line: Desired reference. Blue line: Nonlinear controller. Red line: LQR.}
\end{figure}

\begin{figure}[ht!]
\centering
\includegraphics[width=6cm,height=3cm]{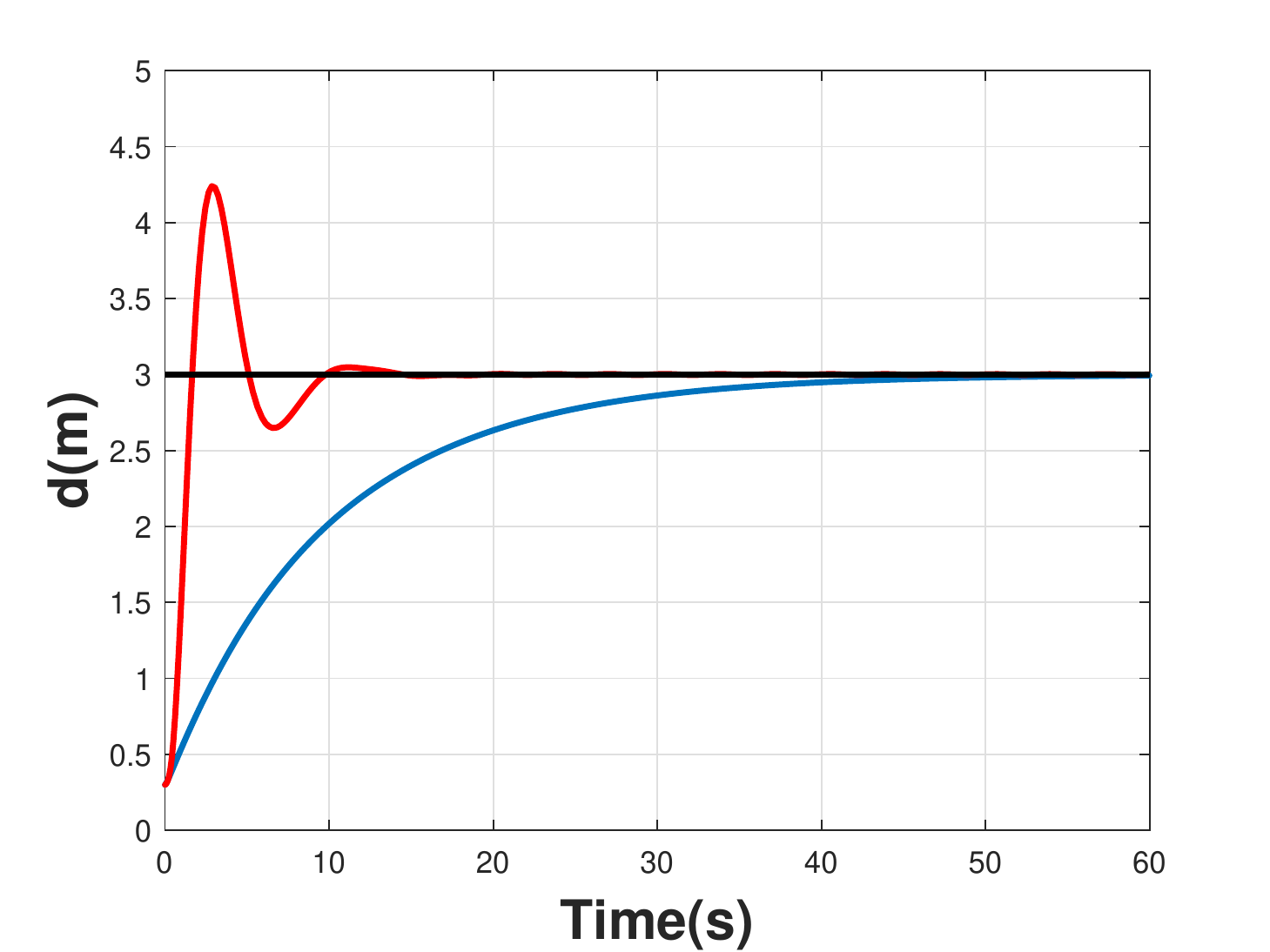}
\caption{\label{fig:d} Rope length. Black line: Desired reference. Blue line: Nonlinear controller. Red line: LQR.}
\end{figure}

\begin{figure}[ht!]
\centering
\includegraphics[width=6cm,height=3cm]{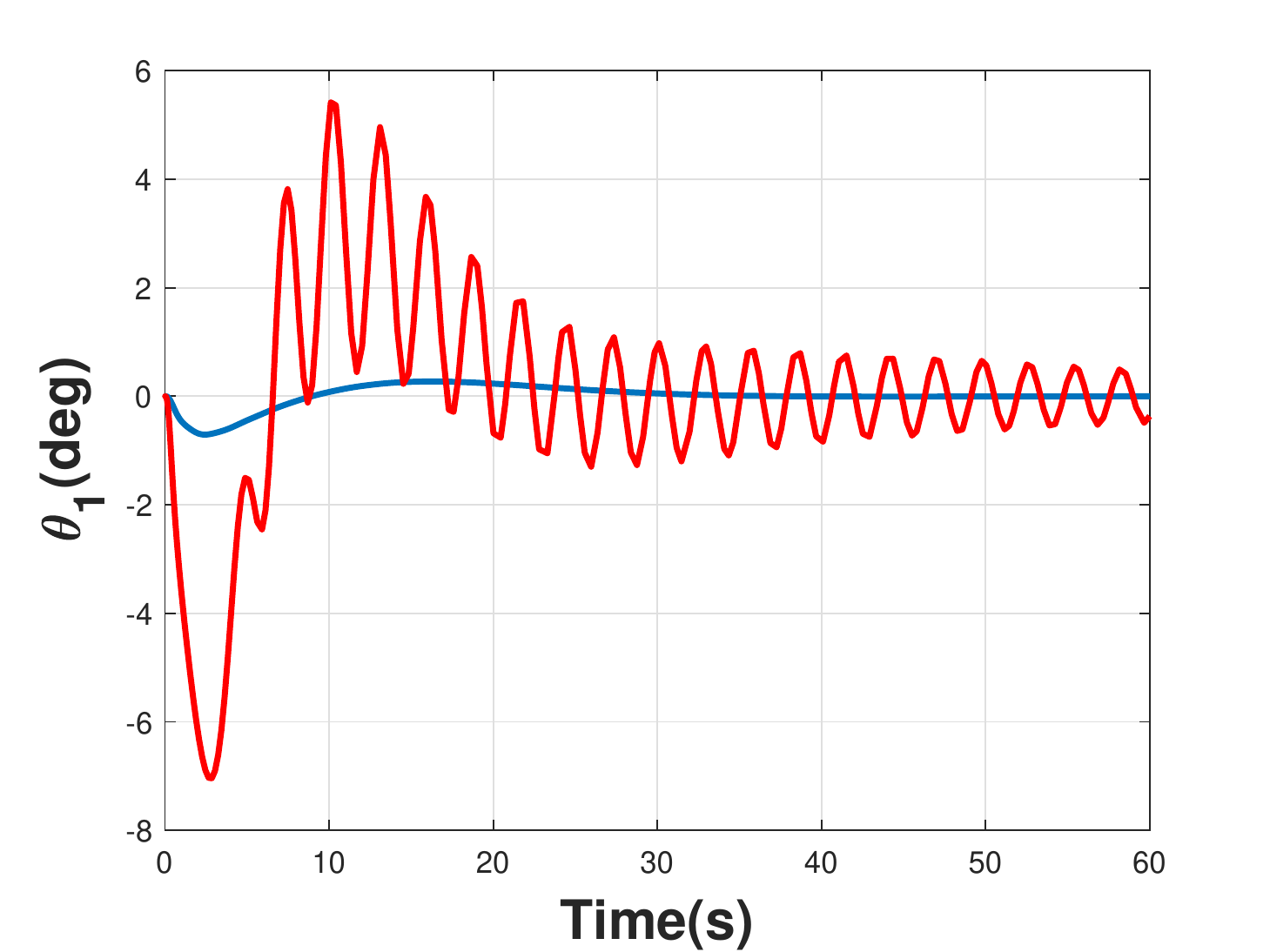}
\caption{\label{fig:th1} Payload angle  $\theta_1$. Blue line: Nonlinear controller. Red line: LQR.}
\end{figure}

\begin{figure}[ht!]
\centering
\includegraphics[width=6cm,height=3cm]{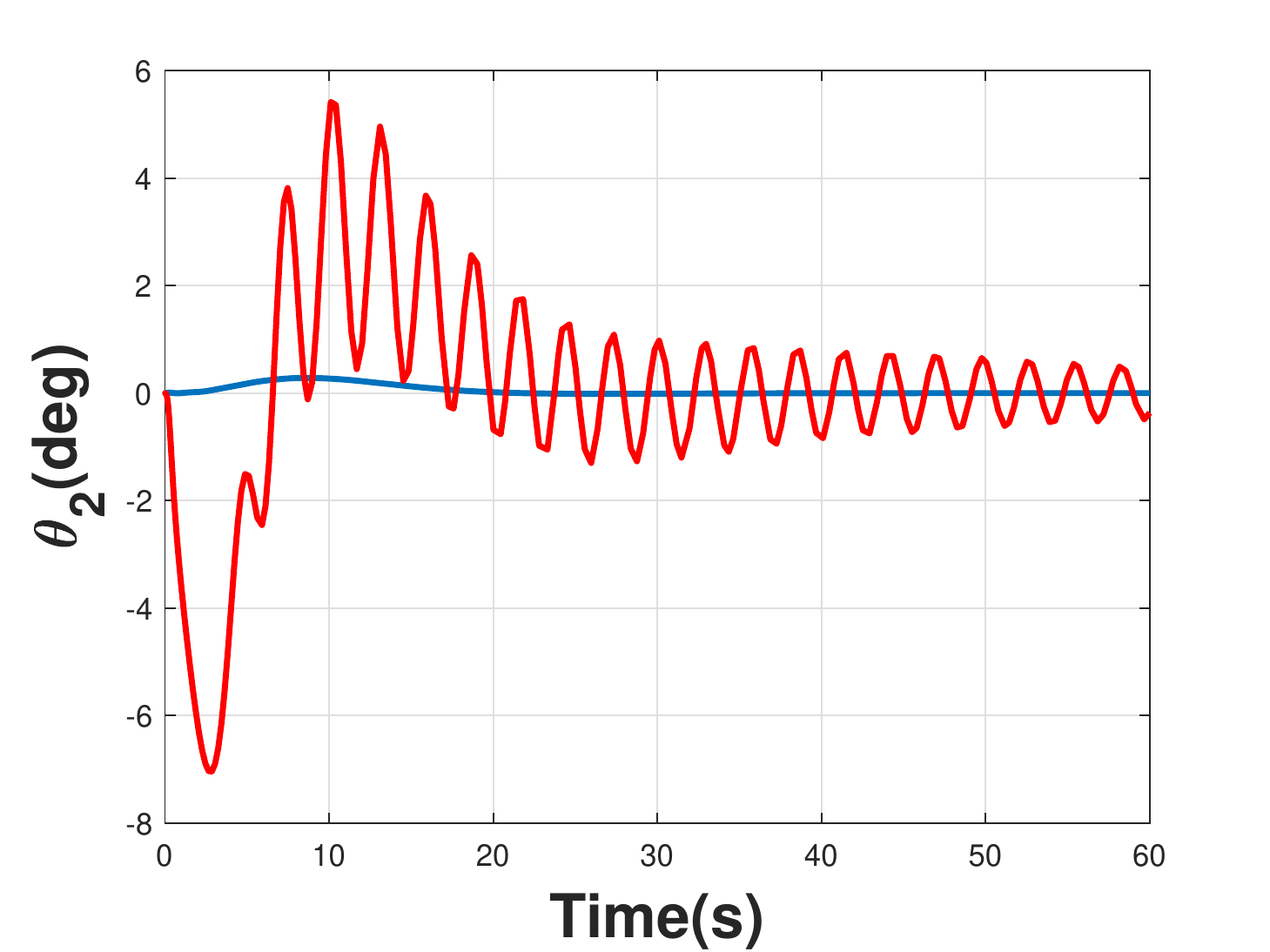}
\caption{\label{fig:th2} Payload angle  $\theta_2$ Blue line: Nonlinear controller. Red line: LQR.}
\end{figure}

\begin{figure}[ht!]
\centering
\includegraphics[width=8cm,height=4cm]{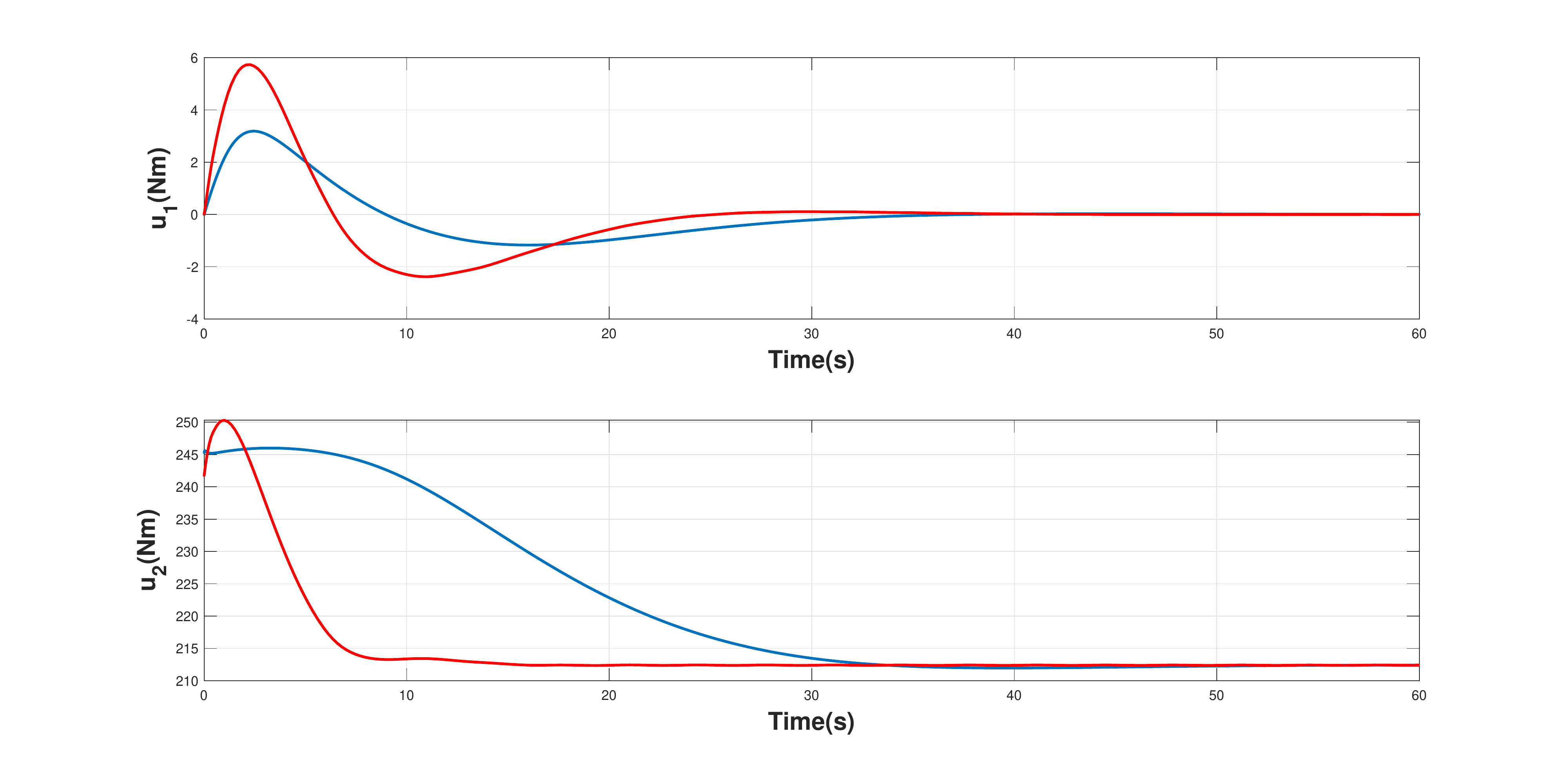}
\caption{\label{fig:u1} Control inputs $u_1~\&~u_2$. Blue line: Nonlinear controller. Red line: LQR.}
\end{figure}

\begin{figure}[ht!]
\centering
\includegraphics[width=8cm,height=4cm]{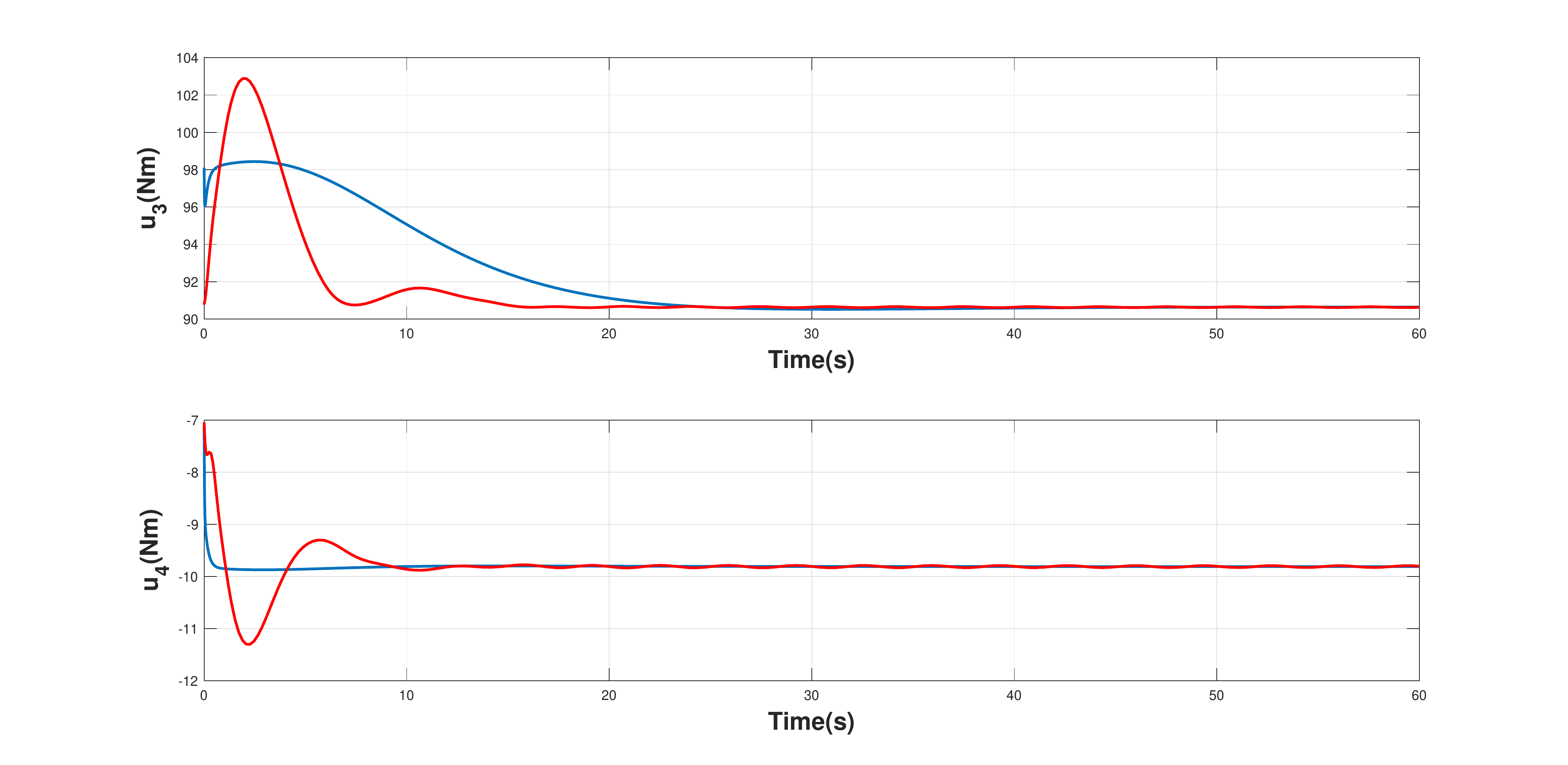}
\caption{\label{fig:u2} Control inputs $u_3~\&~u_4$. Blue line: Nonlinear controller. Red line: LQR.}
\end{figure}

\section{Conclusion}\label{sec:Conclusion}
This paper proposed a nonlinear control scheme for the control of knuckle crane. The main contribution of this article is that for the first time a detailed mathematical model is shown where the complexity of this type of system emerges. Furthermore, the proposed control is based directly on the nonlinear system avoiding linearization. Despite the complexity of the model, the proposed control scheme is able to guide the crane towards a desired reference and ensuring that the non-actuated variables (i.e., $\theta_1$ and $\theta_2$) go to zero in a fast way.

\section*{Appendix}
The nonzero entries of the system matrices $M(q)$, $C(q,\dot q)$, $F(\dot q)$ and $g(q)$ are define as follows:
\begin{align*}
m_{11} = I_{tot} + d^2m + A_1C_{\beta}^2 + A_2C_{\gamma}^2 + A_3C_{\beta}C_{\gamma}\\ 
+ 2A_4dC_{\beta}S_{\theta_2} + 2A_5dC_{\gamma}S_{\theta_2} - d^2mC_{\theta_1}^2C_{\theta_2}^2 \\
m_{22} = A_1 + I_B \quad m_{33} = A_2 + I_J \quad m_{44} = m \\
m_{55} = dmC_{\theta_2}^2 \quad m_{66} = dm \quad m_{12} = m_{21} = A_4dC_{\theta_2}S_{\beta}S_{\theta_1} \\
m_{13} = m_{31} = A_5dC_{\theta_2}S_{\gamma}S_{\theta_1} \quad
m_{14} = m_{41} = C_{\theta_2}S_{\theta_1}(A_4C_{\beta} + A_5C_{\gamma})\\
m_{15} = m_{51} = dC_{\theta_1}C_{\theta_2}(A_4C_{\beta} + A_5C_{\gamma} + dmS_{\theta_2})\\
m_{16} = m_{61} =-dS_{\theta_1}(dm + A_4C_{\beta}S_{\theta_2} + A_5C_{\gamma}S_{\theta_2})\\
m_{23} = m_{32} = {1\over{2}}(A_3C_{\beta-\gamma}),
m_{24} = m_{42} = -A_4(S_{\beta}S_{\theta_2} + C_{\beta}C_{\theta_1}C_{\theta_2})\quad \\
m_{25} = m_{52} = A_4dC_{\beta}C_{\theta_2}S_{\theta_1}\\
m_{26} = m_{62} = -A_4d(C_{\theta_2}S_{\beta} - C_{\beta}C_{\theta_1}S_{\theta_2}) \\
m_{34} = m_{43} = -A_5(S_{\gamma}S_{\theta_2} + C_{\gamma}C_{\theta_1}C_{\theta_2})\\
m_{35} = m_{53} = A_5dC_{\gamma}C_{\theta_2}S_{\theta_1}\quad,
m_{36} = m_{63} = -A_5d(C_{\theta_2}S_{\gamma} - C_{\gamma}C_{\theta_1}S_{\theta_2})\\
c_{11} = 2d\dot{d}m - A_2\dot{\gamma}S_{2\gamma} - A_1\dot{\beta}S_{2\beta} + 2A_4\dot{d}C_{\beta}S_{\theta_2} \\
+ 2A_5\dot{d}C_{\gamma}S_{\theta_2} - A_3\dot{\beta}C_{\gamma}S_{\beta}  - A_3\dot{\gamma}C_{\beta}S_{\gamma} \\
- 2d\dot{d}mC_{\theta_1}^2C_{\theta_2}^2 + 2A_4d\dot{\theta_2}C_{\beta}C_{\theta_2} + 2A_5d\dot{\theta_2}C_{\gamma}C_{\theta_2} 
- 2A_4d\dot{\beta}S_{\beta}S_{\theta_2}\\ - 2A_5d\dot{\gamma}S_{\gamma}S_{\theta_2} + 2d^2\dot{\theta_1}mC_{\theta_1}C_{\theta_2}^2S_{\theta_1}
+ 2d^2\dot{\theta_2}mC_{\theta_1}^2C_{\theta_2}S_{\theta_2}\\
c_{12} = A_4\dot{d}C_{\theta_2}S_{\beta}S_{\theta_1} + 
A_d\dot{\beta}C_{\beta}C_{\theta_2}S_{\theta_1} + \\ A_4d\dot{\theta_1}C_{\theta_1}C_{\theta_2}S_{\beta} - A_4d\dot{\theta_2}S_{\beta}S_{\theta_1}S_{\theta_2} \\
c_{13} = A_5\dot{d}C_{\theta_2}S_{\gamma}S_{\theta_1} + A_5d\dot{\gamma}C_{\gamma}C_{\theta_2}S_{\theta_1} + \\ A_5d\dot{\theta_1}C_{\theta_1}C_{\theta_2}S_{\gamma} - A_5d\dot{\theta_2}S_{\gamma}S_{\theta_1}S_{\theta_2} \\
c_{14} = \dot{\theta_1}C_{\theta_1}C_{\theta_2}(A_4C_{\beta} + A_5C_{\gamma}) - C_{\theta_2}S_{\theta_1}(A_4\dot{\beta}S_{\beta} + A_5\dot{\gamma}S_{\gamma}) \\ - \dot{\theta_2}S_{\theta_1}S_{\theta_2}(A_4C_{\beta} + A_5C_{\gamma}) \\
c_{15} = dC_{\theta_1}C_{\theta_2}(\dot{d}mS_{\theta_2} - A_4\dot{\beta}S_{\beta} - A_5d\dot{\gamma}S_{\gamma} +  d\dot{\theta_2}mC_{\theta_2}) \\ + \dot{d}C_{\theta_1}C_{\theta_2}(A_4C_{\beta} + A_5C_{\gamma} + dmS_{\theta_2}) - d\dot{\theta_1}C_{\theta_2}S_{\theta_1}(A_4C_{\beta} \\ + A_5C_{\gamma}  + dmS_{\theta_2}) - d\dot{\theta_2}C_{\theta_1}S_{\theta_2}(A_4C_{\beta} + A_5C_{\gamma} + dmS_{\theta_2}) \\
c_{16} = - dS_{\theta_1}(\dot{d}m + A_4\dot{\theta_2}C_{\beta}C_{\theta_2} + A_5\dot{\theta_2}C_{\gamma}C_{\theta_2} - \\ A_4d\dot{\beta}S_{\beta}S_{\theta_2} - A_5\dot{\gamma}S_{\gamma}S_{\theta_2}) - \dot{d}S_{\theta_1}(dm + A_4C_{\beta}S_{\theta_2} + \\ A_5C_{\gamma}S_{\theta_2}) - d\dot{\theta_1}C_{\theta_1}(dm + A_4C_{\beta}S_{\theta_2} + A_5C_{\gamma}S_{\theta_2}) \\
c_{21} = {1\over{2}}(\dot{\alpha}S_{\beta}(2A_1C_{\beta} + A_3C_{\gamma} + 2A_4dS_{\theta_2})) + \\ {1\over{2}}(3A_4\dot{d}C_{\theta_2}S_{\beta}S_{\theta_1}) +  {1\over{2}}(A_4d\dot{\beta}C_{\beta}C_{\theta_2}S_{\theta_1}) + \\ {1\over{2}}(3A_4d\dot{\theta_1}C_{\theta_1}C_{\theta_2}S_{\beta}) - {1\over{2}}(3A_4d\dot{\theta_2}S_{\beta}S_{\theta_1}S_{\theta_2}) \\
\end{align*}
\begin{align*}
c_{22} = {1\over{4}}(A_3\dot{\gamma}S_{\beta-\gamma}) + {1\over{2}}(A_4\dot{d}(C_{\beta}S_{\theta_2} -  C_{\theta_1}C_{\theta_2}S_{\beta})) \\ +  {1\over{2}}(A_4d\dot{\theta_2}(C_{\beta}C_{\theta_2}  + C_{\theta_1}S_{\beta}S_{\theta_2})) - {1\over{2}}(A_4d\dot{\alpha}C_{\beta}C_{\theta_2}S_{\beta}) \\ +
{1\over{2}}(A_4d\dot{\theta_1}C_{\theta_2}S_{\beta}S_{\beta}),\quad
c_{23} = -{1\over{4}}(A_3S_{\beta-\gamma})(\dot{\beta} - 2\dot{\gamma})) \\
c_{24} = {1\over{2}}(A_4\dot{\beta}C_{\theta_1}C_{\theta_2}S_{\beta}) - A_4\dot{\theta_2}C_{\theta_2}S_{\beta} - {1\over{2}}(A_4\dot{\beta}C_{\beta}S_{\theta_2}) \\ + A_4\dot{\theta_1}C_{\beta}C_{\theta_2}S_{\beta} + A_4\dot{\theta_2}C_{\beta}C_{\theta_1}S_{\theta_2} +  {1\over{2}}(A_4\dot{\alpha}C_{\theta_2}S_{\beta}S_{\beta}) \\
c_{25} =A_4\dot{d}C_{\beta}C_{\theta_2}S_{\beta} + {1\over{2}}(A_4d\dot{\alpha}C_{\theta_1}C_{\theta_2}S_{\beta}) \\ + 
A_4d\dot{\theta_1}C_{\beta}C_{\theta_1}C_{\theta_2}  -
{1\over{2}}(A_4d\dot{\beta}C_{\theta_2}S_{\beta}S_{\beta}) - A_4d\dot{\theta_2}C_{\beta}S_{\beta}S_{\theta_2} \\
c_{26} = A_4d\dot{\theta_2}S_{\beta}S_{\theta_2} - {1\over{2}}(A_4d\dot{\beta}C_{\beta}C_{\theta_2}) \\ - A_4\dot{d}C_{\theta_2}S_{\beta} +  A_4\dot{d}C_{\beta}C_{\theta_1}S_{\theta_2} \\ -  {1\over{2}}(A_4d\dot{\beta}C_{\theta_1}S_{\beta}S_{\theta_2}) - A_4d\dot{\theta_1}C_{\beta}S_{\beta}S_{\theta_2} \\ - {1\over{2}}(A_4d\dot{\alpha}S_{\beta}S_{\beta}S_{\theta_2}) + A_4d\dot{\theta_2}C_{\beta}C_{\theta_1}C_{\theta_2} \\
c_{31} = {1\over{2}}(\dot{\alpha}S_{\gamma}(2A_2C_{\gamma} + A_3C_{\beta} + 2A_5dS_{\theta_2}))  \\ + {1\over{2}}(3A_5\dot{d}C_{\theta_2}S_{\gamma}S_{\theta_1}) +  {1\over{2}}(A_5d\dot{\gamma}C_{\gamma}C_{\theta_2}S_{\theta_1}) \\ + {1\over{2}}(3A_5d\dot{\theta_1}C_{\theta_1}C_{\theta_2}S_{\gamma})  -{1\over{2}}(3A_5d\dot{\theta_2}S_{\gamma}S_{\theta_1}S_{\theta_2})\\
c_{32} = -{1\over{4}}(A_3S_{\beta-\gamma})(2\dot{\beta} - \dot{\gamma}))\\
c_{33} = {1\over{2}}(A_5\dot{d}(C_{\gamma}S_{\theta_2} - C_{\theta_1}C_{\theta_2}S_{\gamma}))  -{1\over{4}} (A_3\dot{\beta}S_{\beta-\gamma}) \\ + {1\over{2}}(A_5d\dot{\theta_2}(C_{\gamma}C_{\theta_2} + C_{\theta_1}S_{\gamma}S_{\theta_2})) - {1\over{2}}(A_5d\dot{\alpha}C_{\gamma}C_{\theta_2}S_{\theta_1}) \\ + {1\over{2}}(A_5d\dot{\theta_1}C_{\theta_2}S_{\gamma}S_{\theta_1})
\\
c_{34} = {1\over{2}}(A_5\dot{\gamma}C_{\theta_1}C_{\theta_2}S_{\gamma}) - A_5\dot{\theta_2}C_{\theta_2}S_{\gamma} - {1\over{2}}(A_5\dot{\gamma}C_{\gamma}S_{\theta_2}) \\+ A_5\dot{\theta_1}C_{\gamma}C_{\theta_2}S_{\theta_1} + A_5\dot{\theta_2}C_{\gamma}C_{\theta_1}S_{\theta_2} + {1\over{2}}(A_5\dot{\alpha}C_{\theta_2}S_{\gamma}S_{\theta_1})\\
c_{35} = A_5ddC_{\gamma}C_{\theta_2}S_{\theta_1} + {1\over{2}}(A_5d\dot{\alpha}C_{\theta_1}C_{\theta_2}S_{\gamma}) \\ - {1\over{2}}(A_5d\dot{\gamma}C_{\theta_2}S_{\gamma}S_{\theta_1}) - A_5d\dot{\theta_2}C_{\gamma}S_{\theta_1}S_{\theta_2} + A_5d\dot{\theta_1}C_{\gamma}C_{\theta_1}C_{\theta_2}
\\
c_{36} = A_5d\dot{\theta_2}S_{\gamma}S_{\theta_2} - {1\over{2}}(A_5d\dot{\gamma}C_{\gamma}C_{\theta_2}) - A_5\dot{d}C_{\theta_2}S_{\gamma} \\ + A_5\dot{d}C_{\gamma}C_{\theta_1}S_{\theta_2} -{1\over{2}} (A_5d\dot{\gamma}C_{\theta_1}S_{\gamma}S_{\theta_2}) \\- A_5d\dot{\theta_1}C_{\gamma}S_{\theta_1}S_{\theta_2} -{1\over{2}} (A_5d\dot{\alpha}S_{\gamma}S_{\theta_1}S_{\theta_2}) + A_5d\dot{\theta_2}C_{\gamma}C_{\theta_1}C_{\theta_2}\\
c_{41} = d\dot{\theta_2}mS_{\theta_1} - d\dot{\alpha}m - A_4\dot{\alpha}C_{\beta}S_{\theta_2} - A_5\dot{\alpha}C_{\gamma}S_{\theta_2} \\+ d\dot{\alpha}mC_{\theta_1}^2C_{\theta_2}^2 + {1\over{2}}(A_4\dot{\theta_1}C_{\beta}C_{\theta_1}C_{\theta_2}) + {1\over{2}}(A_5\dot{\theta_1}C_{\gamma}C_{\theta_1}C_{\theta_2}) \\-{1\over{2}}(3A_4\dot{\beta}C_{\theta_2}S_{\beta}S_{\theta_1}) - {1\over{2}}(A_4\dot{\theta_2}C_{\beta}S_{\theta_1}S_{\theta_2}) \\ - {1\over{2}}(3A_5\dot{\gamma}C_{\theta_2}S_{\gamma}S_{\theta_1})-{1\over{2}}(A_5\dot{\theta_2}C_{\gamma}S_{\theta_1}S_{\theta_2})-d\dot{\theta_1}mC_{\theta_1}C_{\theta_2}S_{\theta_2}\\
c_{42} = A_4\dot{\beta}C_{\theta_1}C_{\theta_2}S_{\beta} - (A_4\dot{\theta_2}C_{\theta_2}S_{\beta})/2 - A_4\dot{\beta}C_{\beta}S_{\theta_2} \\+ {1\over{2}}(A_4\dot{\theta_1}C_{\beta}C_{\theta_2}S_{\theta_1}) + {1\over{2}}(A_4\dot{\theta_2}C_{\beta}C_{\theta_1}S_{\theta_2}) - {1\over{2}}(A_4\dot{\alpha}C_{\theta_2}S_{\beta}S_{\theta_1}) \\
\end{align*}
\begin{align*}
c_{43} = A_5\dot{\gamma}C_{\theta_1}C_{\theta_2}S_{\gamma} - {1\over{2}}(A_5\dot{\theta_2}C_{\theta_2}S_{\gamma})  \\ - A_5\dot{\gamma}C_{\gamma}S_{\theta_2} + {1\over{2}}(A_5\dot{\theta_1}C_{\gamma}C_{\theta_2}S_{\theta_1})  \\ + {1\over{2}}(A_5\dot{\theta_2}C_{\gamma}C_{\theta_1}S_{\theta_2}) -
{1\over{2}}(A_5\dot{\alpha}C_{\theta_2}S_{\gamma}S_{\theta_1}) \\
c_{45} = d\dot{\theta_1}m(S_{\theta_2}^2 - 1) - {1\over{2}}(\dot{\alpha}C_{\theta_1}C_{\theta_2}(A_4C_{\beta} \\ + A_5C_{\gamma} + 2dmS_{\theta_2})) - {1\over{2}}(A_4\dot{\beta}C_{\beta}C_{\theta_2}S_{\theta_1})  \\ - {1\over{2}}(A_5\dot{\gamma}C_{\gamma}C_{\theta_2}S_{\theta_1})\\
c_{46} = {1\over{2}}(A_4\dot{\beta}(C_{\theta_2}S_{\beta} - C_{\beta}C_{\theta_1}S_{\theta_2})) \\ + {1\over{2}}(A_5\dot{\gamma}(C_{\theta_2}S_{\gamma} - C_{\gamma}C_{\theta_1}S_{\theta_2})) \\ - d\dot{\theta_2}m + {1\over{2}}(\dot{\alpha}S_{\theta_1}(2dm \\ + A_4C_{\beta}S_{\theta_2} + A_5C_{\gamma}S_{\theta_2}))  \\
c_{51} = 2d^2\dot{\theta_2}mC_{\theta_1}C_{\theta_2}^2 - {1\over{2}}(d^2\dot{\theta_2}mC_{\theta_1}) \\ +  {1\over{2}}(A_4\dot{d}C_{\beta}C_{\theta_1}C_{\theta_2}) + {1\over{2}}(A_5\dot{d}C_{\gamma}C_{\theta_1}C_{\theta_2}) \\ - {1\over{2}}(3A_4d\dot{\beta}C_{\theta_1}C_{\theta_2}S_{\beta})  - {1\over{2}}(A_4d\dot{\theta_1}C_{\beta}C_{\theta_2}S_{\theta_1}) \\ - {1\over{2}}(A_4d\dot{\theta_2}C_{\beta}C_{\theta_1}S_{\theta_2})  - {1\over{2}}(3A_5d\dot{\gamma}C_{\theta_1}C_{\theta_2}S_{\gamma}) \\ - {1\over{2}}(A_5d\dot{\theta_1}C_{\gamma}C_{\theta_2}S_{\theta_1}) \\ - {1\over{2}}(A_5d\dot{\theta_2}C_{\gamma}C_{\theta_1}S_{\theta_2}) + 2d\dot{d}mC_{\theta_1}C_{\theta_2}S_{\theta_2} \\ - {1\over{2}}(d^2\dot{\theta_1}mC_{\theta_2}S_{\theta_1}S_{\theta_2}) - d^2\dot{\alpha}mC_{\theta_1}C_{\theta_2}^2S_{\theta_1} \\
c_{52} = {1\over{2}}(A_4\dot{d}C_{\beta}C_{\theta_2}S_{\theta_1}) - {1\over{2}}(A_4d\dot{\alpha}C_{\theta_1}C_{\theta_2}S_{\beta}) \\- A_4d\dot{\beta}C_{\theta_2}S_{\beta}S_{\theta_1} - {1\over{2}}(A_4d\dot{\theta_2}C_{\beta}S_{\theta_1}S_{\theta_2}) \\ + {1\over{2}}(A_4d\dot{\theta_1}C_{\beta}C_{\theta_1}C_{\theta_2}) \\
c_{53} = {1\over{2}}(A_5\dot{d}C_{\gamma}C_{\theta_2}S_{\theta_1}) - {1\over{2}}(A_5d\dot{\alpha}C_{\theta_1}C_{\theta_2}S_{\gamma}) \\ - A_5d\dot{\gamma}C_{\theta_2}S_{\gamma}S_{\theta_1}  - {1\over{2}}(A_5d\dot{\theta_2}C_{\gamma}S_{\theta_1}S_{\theta_2}) \\ + {1\over{2}}(A_5d\dot{\theta_1}C_{\gamma}C_{\theta_1}C_{\theta_2}) \\
c_{54} = - {1\over{2}}(\dot{\alpha}C_{\theta_1}C_{\theta_2}(A_4C_{\beta} + A_5C_{\gamma})) \\ - {1\over{2}}(A_4\dot{\beta}C_{\beta}C_{\theta_2}S_{\theta_1}) - {1\over{2}}(A_5\dot{\gamma}C_{\gamma}C_{\theta_2}S_{\theta_1})\\
c_{55} = {1\over{2}}(d\dot{\alpha}C_{\theta_2}S_{\theta_1}(A_4C_{\beta} + A_5C_{\gamma} + dmS_{\theta_2})) \\ - {1\over{2}}(A_4d\dot{\beta}C_{\beta}C_{\theta_1}C_{\theta_2}) - {1\over{2}}(A_5d\dot{\gamma}C_{\gamma}C_{\theta_1}C_{\theta_2}) \\
c_{56} = {1\over{2}}(d\dot{\alpha}C_{\theta_1}(dm + A_4C_{\beta}S_{\theta_2} + A_5C_{\gamma}S_{\theta_2})) \\+ {1\over{2}}(A_4d\dot{\beta}C_{\beta}S_{\theta_1}S_{\theta_2}) + {1\over{2}}(A_5d\dot{\gamma}C_{\gamma}S_{\theta_1}S_{\theta_2}) \end{align*}
\begin{align*}
c_{61} =  {1\over{2}}(3A_4d\dot{\beta}S_{\beta}S_{\theta_1}S_{\theta_2}) - {1\over{2}}(d^2\dot{\theta_1}mC_{\theta_1})\\ - A_4d\dot{\alpha}C_{\beta}C_{\theta_2} - A_5d\dot{\alpha}C_{\gamma}C_{\theta_2} - {1\over{2}}(A_4d\dot{d}C_{\beta}S_{\theta_1}S_{\theta_2}) \\ - {1\over{2}}(A_5d\dot{d}C_{\gamma}S_{\theta_1}S_{\theta_2})  - {1\over{2}}(A_4d\dot{\theta_1}C_{\beta}C_{\theta_1}S_{\theta_2}) \\ - {1\over{2}}(A_4d\dot{\theta_2}C_{\beta}C_{\theta_2}S_{\theta_1})  - {1\over{2}}(A_5d\dot{\theta_1}C_{\gamma}C_{\theta_1}S_{\theta_2})\\ - {1\over{2}}(A_5d\dot{\theta_2}C_{\gamma}C_{\theta_2}S_{\theta_1}) - 2d\dot{d}mS_{\theta_1}+ {1\over{2}}(3A_5d\dot{\gamma}S_{\gamma}S_{\theta_1}S_{\theta_2}) \\
c_{62} = {1\over{2}}(A_4d\dot{\theta_2}S_{\beta}S_{\theta_2}) - A_4d\dot{\beta}C_{\beta}C_{\theta_2} -\  {1\over{2}}(A_4d\dot{d}C_{\theta_2}S_{\beta}) \\+ {1\over{2}}(A_4d\dot{d}C_{\beta}C_{\theta_1}S_{\theta_2}) - {1\over{2}}(A_4d\dot{\theta_1}C_{\beta}S_{\theta_1}S_{\theta_2}) \\ + {1\over{2}}(A_4d\dot{\alpha}S_{\beta}S_{\theta_1}S_{\theta_2}) + {1\over{2}}(A_4d\dot{\theta_2}C_{\beta}C_{\theta_1}C_{\theta_2})\\
c_{63} = {1\over{2}}(A_5d\dot{\theta_2}S_{\gamma}S_{\theta_2}) - A_5d\dot{\gamma}C_{\gamma}C_{\theta_2} - {1\over{2}}(A_5d\dot{d}C_{\theta_2}S_{\gamma}) \\ +{1\over{2}}(A_5d\dot{d}C_{\gamma}C_{\theta_1}S_{\theta_2}) - A_5d\dot{\gamma}C_{\theta_1}S_{\gamma}S_{\theta_2} \\ + {1\over{2}}(A_5d\dot{\alpha}S_{\gamma}S_{\theta_1}S_{\theta_2}) + {1\over{2}}(A_5d\dot{\theta_2}C_{\gamma}C_{\theta_1}C_{\theta_2})\\
c_{64} = {1\over{2}}(A_4\dot{\beta}(C_{\theta_2}S_{\beta} - C_{\beta}C_{\theta_1}S_{\theta_2})) + {1\over{2}}(A_5\dot{\gamma}(C_{\theta_2}S_{\gamma} \\ - C_{\gamma}C_{\theta_1}S_{\theta_2})) + {1\over{2}}(\dot{\alpha}S_{\theta_1}S_{\theta_2}(A_4C_{\beta} + A_5C_{\gamma})) \\
c_{65} = {1\over{2}}(d^2\dot{\alpha}mC_{\theta_1}) - d^2\dot{\alpha}mC_{\theta_1}C_{\theta_2}^2  +  {1\over{2}}(A_4d\dot{\alpha}C_{\beta}C_{\theta_1}S_{\theta_2}) \\ + {1\over{2}}(A_4d\dot{\beta}C_{\beta}S_{\theta_1}S_{\theta_2}) + {1\over{2}}(A_5d\dot{\gamma}C_{\gamma}S_{\theta_1}S_{\theta_2}) \\
c_{66} = {1\over{2}}(d\dot{\alpha}C_{\theta_2}S_{\theta_1}(A_4C_{\beta} + A_5C_{\gamma})) - \\ {1\over{2}}(A_5d\dot{\gamma}(S_{\gamma}S_{\theta_2} + C_{\gamma}C_{\theta_1}C_{\theta_2})) \\ - {1\over{2}}(A_4d\dot{\beta}(S_{\beta}S_{\theta_2} + C_{\beta}C_{\theta_1}C_{\theta_2}))\\
g_2 = {1\over{2}}gl_Bcos\beta(2m+m_B+2m_J), \\ 
g_3 = {1\over{2}}gl_Jcos\gamma(2m+m_J), \quad
g_4 = -gmcos\theta_1cos\theta_2,\\
g_5 = gmdcos\theta_2sin\theta_1,\quad 
g_6 = gmdcos\theta_1sin\theta_2 \\
f_1 = d_{\theta_1}C_{\theta_2}^2\lvert \theta_1\rvert\ \dot{\theta}_1, \quad\quad
f_2 = d_{\theta_2} \lvert \theta_2\rvert\ \dot{\theta_2}
\end{align*}
where the following auxiliary variables are defined: $
    A_1 = l_B^2m + (l_B^2m_B)/4+l_B^2 m_J,\quad A_2 = l_J^2m + (l_J^2 m_J)/4,
    A_3 = 2l_Bl_J m+ l_Bl_J m_J,\quad A_4 = 2l_B m,\quad A_5 = 2l_J m $

\bibliography{ISARC}

\end{document}